\newtheorem{theorem}{Theorem}[section]
\newtheorem{lemma}[theorem]{Lemma}
\theoremstyle{definition}
\newtheorem{remark}{Remark}
\title[On the thermal relaxation of a dense gas in a closed system] 
      {On the thermal relaxation of a dense gas described by the modified Enskog
equation in a closed system in contact with a heat bath}
\author[Shigeru Takata]{}
\subjclass{Primary: 82C40, 82B40; Secondary: 76P99.}
 \keywords{Enskog equation, kinetic theory, dense gas, Helmholtz free energy, H theorem.}
 \email{takata.shigeru.4a@kyoto-u.ac.jp}
\thanks{$^*$ Corresponding author: Shigeru Takata}
\begin{document}
\maketitle

\centerline{\scshape Shigeru TAKATA$^*$}
\medskip
{\footnotesize
 \centerline{Department of Aeronautics and Astronautics, Kyoto University,}
   \centerline{Kyoto daigaku-katsura, Nishikyo-ku, Kyoto 615-8540, Japan}
} 

%

\bigskip

 \centerline{(Communicated by the associate editor name)}

\begin{abstract}
The thermal relaxation of a dense gas described by the modified Enskog
equation is studied for a closed system in contact with a heat bath.
As in the case of the Boltzmann equation, the Helmholtz free energy
$\mathcal{F}$ that decreases monotonically in time is found under
the conventional kinetic boundary condition that satisfies the Darrozes--Guiraud
inequality. 
{The extension to the modified Enskog--Vlasov equation is also presented.}
\end{abstract}

\section{Introduction\label{sec:Introduction}}

Behavior of ideal gases is well described by the Boltzmann equation
for the entire range of the Knudsen number, the ratio of the mean
free path of gas molecules to a characteristic length of the system.
The kinetic theory based on the Boltzmann equation and its model equations
has been applied successfully to analyses of various gas flows in
low pressure circumstances, micro-scale gas flows, and gas flows caused
by the evaporation/condensation at the gas-liquid interface. 

The extension of the kinetic theory to non-ideal gases would go back to the dates
of Enskog \cite{E72}. He took account of the displacement effect of molecules
in collision integrals for a hard-sphere gas and proposed
a kinetic equation that is nowadays called the (original) Enskog equation.
In the original Enskog equation, there appears a weight function that
represents an equilibrium correlation function at the contact point of two colliding
molecules. On one hand, satisfactory outcomes {of} the original Enskog equation, 
such as the dense gas effects on the transport properties,
{led} to recent developments of numerical algorithms \cite{F97,WLRZ16} and their applications to physical problems, e.g., \cite{F99,KKW14,WLRZ16,FGLS18}.
On the other hand, {the intuitive} choice of the correlation 
was recognized to cause some difficulties in recovering the H theorem, as well as
the Onsager reciprocity in the case of mixtures, 
and triggered off further intensive studies on the foundation of the equation around from late 60's to early 80's, 
see, e.g., \cite{BE73,R78,MGB18} and references therein. 

Among many efforts in the above-mentioned period, 
Resibois~\cite{R78} succeeded to prove the H theorem, not
for the original but for the modified Enskog equation \cite{BE73} equipped with another form of correlation function. 
In most cases, including the work of Resibois, the H theorem {was} discussed mainly for periodic or unbounded spatial domains, or for cases where { the influence of a boundary was not necessary to consider}, e.g.,~\cite{R78,GG80,BB18,KS82}.
Rather recently, a proof  {was given by Maynar~et~al.~\cite{MGB18} for an isolated system, assuming the specular reflection condition, where special care was directed to a restriction on the range of collision integral} near the boundary. It seems, however, that the thermal relaxation in contact with a heat bath receives little attention in the literature, {despite the fact} that it is one of the fundamental issues in {the} thermo-statistical physics. 
Although the interaction with the thermostat boundary is considered in a recent monograph of Dorfman~et~al.~\cite{DBK21}, we are not aware of a direct discussion on the thermal relaxation of a dense gas in a closed system in contact with a heat bath in the context of the modified Enskog equation.

In the present paper, we would like to fill the gap by a simple argument and to show that, if the boundary condition satisfies the Darrozes--Guiraud
inequality~\cite{DG66} that is conventionally required in {the} kinetic theory, the Helmholtz free energy $\mathcal{F}$
that decreases monotonically in time can be found for a closed system
described by the modified Enskog equation as in the case of the Boltzmann
equation.

\section{Problem and formulation\label{sec:Modified-Enskog-equation}}

Consider a dense gas in a domain that is surrounded by a simple resting
solid wall kept at a uniform temperature $T_{w}$, i.e., a heat bath
with temperature $T_{w}$. We will study the relaxation of the gas toward
a thermal equilibrium state with the heat bath under the following
assumptions:
\begin{enumerate}
\item The behavior of the gas is described by the modified Enskog equation
for a single species gas;
\item The gas molecules are hard spheres with a common diameter $\sigma$
and mass $m$ and the collisions among themselves are elastic;
\item The velocity distribution of gas molecules reflected on the surface
of the heat bath is described by the kinetic boundary condition that
is conventionally used for the Boltzmann equation, the details of
which will be given in \eqref{KBC}.
\end{enumerate}
Let $D$ be a fixed spatial domain that the centers of molecules of
a gas can occupy. Let {$t$, $\bm{X}$ and $\bm{Y}$, and $\bm{\xi}$}
be a time, spatial {positions}, and a molecular velocity, respectively.
Then, denoting the one-particle distribution function of gas molecules
by $f(t,\bm{X},\bm{\xi}$) {and the correlation function  
by $g(t,\bm{X},\bm{Y})$}, the modified Enskog equation is written
as \begin{subequations}\label{MEE}
\begin{align}
 & \frac{\partial f}{\partial t}+\xi_{i}\frac{\partial f}{\partial X_{i}}=J_{ME}(f)\equiv J_{ME}^{G}(f)-J_{ME}^{L}(f),\quad \mathrm{for\ }\bm{X}\in D,\displaybreak[0]\label{eq:2.1}\\
 & J_{ME}^{G}(f)\equiv\frac{\sigma^{2}}{m}\int {g(\bm{X}_{\sigma\bm{\alpha}}^{+},\bm{X})f_{*}^{\prime}(\bm{X}_{\sigma\bm{\alpha}}^{+})f^{\prime}(\bm{X})}V_{\alpha}\theta(V_{\alpha})d\Omega(\bm{\alpha})d\bm{\xi}_{*},\displaybreak[0]\label{eq:2.2}\\
 & J_{ME}^{L}(f)\equiv\frac{\sigma^{2}}{m}\int {g(\bm{X}_{\sigma\bm{\alpha}}^{-},\bm{X})f_{*}(\bm{X}_{\sigma\bm{\alpha}}^{-})f(\bm{X})}V_{\alpha}\theta(V_{\alpha})d\Omega(\bm{\alpha})d\bm{\xi}_{*},\label{eq:2.3}
\end{align}
\end{subequations}where $\bm{X}_{\bm{x}}^{\pm}=\bm{X}\pm\bm{x}$,
$\bm{\alpha}$ is a unit vector, 
{\begin{equation}
\theta(x)=\begin{cases}
1, & x\ge0\\
0, & x<0
\end{cases},
\end{equation}%
$d\Omega(\bm{\alpha})$ is a
solid angle element in the direction of $\bm{\alpha}$,
and the following notation convention is used:}
\begin{align}
 & \begin{cases}
{
f(\bm{X})=f(\bm{X},\bm{\xi}),\ f^{\prime}(\bm{X})=f(\bm{X},\bm{\xi}^{\prime})},\\
{
f_{*}(\bm{X}_{\sigma\bm{\alpha}}^{-})=f(\bm{X}_{\sigma\bm{\alpha}}^{-},\bm{\xi}_{*}),\ f_{*}^{\prime}(\bm{X}_{\sigma\bm{\alpha}}^{-})=f(\bm{X}_{\sigma\bm{\alpha}}^{-},\bm{\xi}_{*}^{\prime})},
\end{cases}\displaybreak[0]\\
 & \bm{\xi}^{\prime}=\bm{\xi}+V_{\alpha}\bm{\alpha},\quad\bm{\xi}_{*}^{\prime}=\bm{\xi}_{*}-V_{\alpha}\bm{\alpha},\quad V_{\alpha}=\bm{V}\cdot\bm{\alpha},\quad\bm{V}=\bm{\xi_{*}}-\bm{\xi}.\label{eq:2.5}
\end{align}
{ Here and in what follows, the argument $t$ is suppressed, unless confusion is anticipated.}
{Our correlation function $g$ is adjusted to the domain $D$ in such a way that the usual correlation function $g_{2}(t,\bm{X},\bm{Y})$ is modified as
\begin{subequations}\label{g-g2}\begin{align}
 & g(t,\bm{X},\bm{Y})=g_{2}(t,\bm{X},\bm{Y})\chi_{D}(\bm{X})\chi_{D}(\bm{Y}),\displaybreak[0]\label{eq:g_g2}\\
 & \chi_{D}(\bm{X})=\begin{cases}
1, & \bm{X}\in D\\
0, & \mbox{otherwise}
\end{cases},
\end{align}\end{subequations}
where $\chi_D$ plays the same role as the Heaviside function $\theta$.
Consequently, the range of integration in
(\ref{eq:2.2}) and (\ref{eq:2.3}) can be treated as the whole space
of $\bm{\xi}_*$ and all directions of $\bm{\alpha}$ even near the
surface of the domain $\partial D$.
In contrast to the original Enskog equation, 
$g_2$ takes a complicated form that requires further supplemental notation.
For the moment, it suffices to mention that
$g_{2}$ has a symmetric property $g_{2}(t,\bm{X},\bm{Y})=g_{2}(t,\bm{Y},\bm{X})$
and is a functional of a gas density 
\begin{equation}
\rho=\int fd\bm{\xi}.
\end{equation}
Therefore \eqref{MEE} is closed as the equation for $f$.
By \eqref{g-g2}, $g$ has}
the same symmetric property as $g_{2}$:
\begin{equation}
g(t,\bm{X},\bm{Y})=g(t,\bm{Y},\bm{X}).
\end{equation}
{Further details of $g_2$ can be found in Appendix~\ref{sec:Correlation-function}.}

The boundary condition is applied on the surface $\partial D$ of
the domain $D$:\begin{subequations}\label{KBC}
\begin{equation}
f(t,\bm{X},\bm{\xi})=\int_{\bm{\xi}_{*}\cdot\bm{n}<0}K(\bm{\xi},\bm{\xi}_{*}|\bm{X})f(t,\bm{X},\bm{\xi}_{*})d\bm{\xi}_{*},\quad(\bm{\xi}\cdot\bm{n}>0,\ \bm{X}\in\partial D),\label{eq:2.10}
\end{equation}
where $K(\bm{\xi},\bm{\xi}_{*}|\bm{X})$ is a scattering kernel assumed
to be time-independent, $\bm{n}$ is the inward unit normal to the
surface $\partial D$ at position $\bm{X}$, and the boundary is assumed
to be at rest. The following properties are conventionally supposed for a kinetic
boundary condition: \cite{S07}
\begin{enumerate}
\item Non-negativeness:
\begin{equation}
K(\bm{\xi},\bm{\xi}_{*}|\bm{X})\ge0,\quad(\bm{\xi}\cdot\bm{n}>0,\ \bm{\xi}_{*}\cdot\bm{n}<0);
\end{equation}
\item Normalization:
\begin{equation}
\int_{\bm{\xi}\cdot\bm{n}>0}\Big|\frac{\bm{\xi}\cdot\bm{n}}{\bm{\xi}_{*}\cdot\bm{n}}\Big|K(\bm{\xi},\bm{\xi}_{*}|\bm{X})d\bm{\xi}=1,\quad(\bm{\xi}_{*}\cdot\bm{n}<0),\label{eq:3.12}
\end{equation}
where the integrand in (\ref{eq:3.12}) is the so-called reflection
probability density. Equation~(\ref{eq:3.12}) implies that the boundary
$\partial D$ is impermeable;
\item Preservation of equilibrium: The resting Maxwellian $f_{w}$ characterized
by the surface temperature $T_{w}$, i.e.,
\begin{equation}
f_{w}=\frac{a}{(2\pi RT_{w})^{3/2}}\exp(-\frac{\bm{\xi}^{2}}{2RT_{w}}),\label{eq:fw}
\end{equation}
with $a(>0)$ being arbitrary, satisfies the boundary condition (\ref{eq:2.10}),
and the other Maxwellians do not satisfy (\ref{eq:2.10}).
\end{enumerate}
\end{subequations}The diffuse reflection, the Maxwell, and the Cercignani--Lampis
condition \cite{CL71,C88,S07} that are widely used for the Boltzmann
equation are specific examples of \eqref{KBC}. Note that the uniqueness in the third
property listed above excludes the adiabatic boundary such as the
specular reflection condition. As to the H theorem for the specular
reflection case, the reader is referred to \cite{MGB18}.

The form of the modified Enskog equation \eqref{MEE} is identical
to the one for a confined isolated system discussed in \cite{MGB18}. 
In our formulation, { $\chi_{D}$} is used
to make simpler the integration range near
the surface $\partial D$.

\section{Collisional contributions to the momentum and the energy transport\label{sec:Collisional-contributions}}

Before going into details, we recall three types of operation that
are useful in the transformation of the moments of collision integrals:
\begin{description}
\item [{(I)}] to exchange the letters $\bm{\xi}$ and $\bm{\xi}_{*}$; 
\item [{(II)}] to reverse the direction of $\bm{\alpha}$ (or introduce $\bm{\beta}=-\bm{\alpha}$);
\item [{(III)}] to change the integration variables from $(\bm{\xi},\bm{\xi}_{*},\bm{\alpha})$
to $(\bm{\xi}^{\prime},\bm{\xi}_{*}^{\prime},\bm{\alpha})$ and then to
change the letters $(\bm{\xi}^{\prime},\bm{\xi}_{*}^{\prime})$ to
$(\bm{\xi},\bm{\xi}_{*})$.
\end{description}
These operations will be used also in Sec.~\ref{sec:H-function}.

We then notice that, by (III) and (II),
\begin{equation}
\int{\varphi} J_{ME}^{G}(f)d\bm{\xi}=\int{\varphi}^{\prime}J_{ME}^{L}(f)d\bm{\xi},\label{eq:4.1}
\end{equation}
holds for any ${\varphi}(\bm{\xi})$ and thus
\begin{align}
 & \frac{m}{\sigma^{2}}\int{\varphi}(\bm{\xi})J_{ME}(f)d\bm{\xi}\nonumber \\
= & \int({\varphi}^{\prime}-{\varphi})g(\bm{X}_{\sigma\bm{\alpha}}^{-},\bm{X})f_{*}(\bm{X}_{\sigma\bm{\alpha}}^{-})f(\bm{X})V_{\alpha}\theta(V_{\alpha})d\Omega(\bm{\alpha})d\bm{\xi}_{*}d\bm{\xi}.\label{eq:mom_J}
\end{align}

First, it is obvious from (\ref{eq:mom_J}) with ${\varphi}=1$ that $\int J_{ME}(f)d\bm{\xi}=0$.
Hence, the continuity equation is obtained by the integration of (\ref{eq:2.1})
with respect to $\bm{\xi}$:
\begin{equation}
\frac{\partial\rho}{\partial t}+\frac{\partial}{\partial\bm{X}}\cdot(\rho\bm{v})=0.\label{eq:continuity}
\end{equation}
Here $\bm{v}$ (or $v_i$) is a flow velocity defined by
\begin{equation}
v_{i}=\frac{1}{\rho}\int\xi_{i}fd\bm{\xi}.\label{eq:velocity}
\end{equation}
 
Next, consider two kinds of collision invariants
$\psi$ as ${\varphi}$ in (\ref{eq:mom_J}): (i) $\psi(\bm{\xi})=\xi_{i}$ and (ii) $\psi(\bm{\xi})=\bm{\xi}^{2}/2$.
One of the main qualitative differences from the Boltzmann equation
is that $\psi$-moment of the collision term does not vanish in general.
For both (i) and (ii), (\ref{eq:mom_J}) with ${\varphi}=\psi$ can be
transformed as
\begin{subequations}\label{eq:4.3}\begin{align}
    \frac{m}{\sigma^{2}}\int\psi(\bm{\xi})J_{ME}(f)d\bm{\xi}
= & \int(\psi_{*}^{\prime}-\psi_{*})g(\bm{X}_{\sigma\bm{\beta}}^{+},\bm{X})f(\bm{X}_{\sigma\bm{\beta}}^{+})f_{*}(\bm{X})V_{\beta}\theta(V_{\beta})d\Omega(\bm{\beta})d\bm{\xi}d\bm{\xi}_{*}\displaybreak[0]\nonumber \\
= & \int(\psi-\psi^{\prime})g(\bm{X}_{\sigma\bm{\beta}}^{+},\bm{X})f(\bm{X}_{\sigma\bm{\beta}}^{+})f_{*}(\bm{X})V_{\beta}\theta(V_{\beta})d\Omega(\bm{\beta})d\bm{\xi}d\bm{\xi}_{*},
\label{eq:4.3a}
\end{align}
where (I) and (II) are used at the first equality, 
while $\psi$+$\psi_{*}$=$\psi^{\prime}$+$\psi_{*}^{\prime}$
is used at the second equality.
Combining \eqref{eq:4.3a} and (\ref{eq:mom_J})
for ${\varphi}=\psi$ gives
\begin{align}
 \frac{m}{\sigma^{2}}\int\psi(\bm{\xi})J_{ME}(f)d\bm{\xi}
= & \frac{1}{2}\int(\psi^{\prime}-\psi)\{g(\bm{X}_{\sigma\bm{\alpha}}^{-},\bm{X})f_{*}(\bm{X}_{\sigma\bm{\alpha}}^{-})f(\bm{X})\nonumber \\
 & -g(\bm{X}_{\sigma\bm{\alpha}}^{+},\bm{X})f(\bm{X}_{\sigma\bm{\alpha}}^{+})f_{*}(\bm{X})\}V_{\alpha}\theta(V_{\alpha})d\Omega(\bm{\alpha})d\bm{\xi}_{*}d\bm{\xi}.\label{eq:4.3b}
\end{align}\end{subequations}
Since
\begin{align}
 & g(\bm{X}_{\sigma\bm{\alpha}}^{-},\bm{X})f_{*}(\bm{X}_{\sigma\bm{\alpha}}^{-})f(\bm{X})-g(\bm{X}_{\sigma\bm{\alpha}}^{+},\bm{X})f(\bm{X}_{\sigma\bm{\alpha}}^{+})f_{*}(\bm{X})\displaybreak[0]\notag\\
= & -\int_{0}^{\sigma}\frac{\partial}{\partial\lambda}g(\bm{X}_{\lambda\bm{\alpha}}^{+},\bm{X}_{(\lambda-\sigma)\bm{\alpha}}^{+})f_{*}(\bm{X}_{(\lambda-\sigma)\bm{\alpha}}^{+})f(\bm{X}_{\lambda\bm{\alpha}}^{+})d\lambda\displaybreak[0]\notag\\
= & -\bm{\nabla}\cdot\int_{0}^{\sigma}\bm{\alpha}g(\bm{X}_{\lambda\bm{\alpha}}^{+},\bm{X}_{(\lambda-\sigma)\bm{\alpha}}^{+})f_{*}(\bm{X}_{(\lambda-\sigma)\bm{\alpha}}^{+})f(\bm{X}_{\lambda\bm{\alpha}}^{+})d\lambda,\label{eq:15}
\end{align}
\eqref{eq:4.3b} gives rise to the notion of collisional contributions to the stress tensor
$p_{ij}^{(c)}$ and the heat flow $q_{i}^{(c)}$ defined as
\begin{subequations}\label{gradient}
\begin{align}
p_{ij}^{(c)}= & \frac{\sigma^{2}}{2m}\int {\int_{0}^{\sigma}}\alpha_{i}\alpha_{j}V_{\alpha}^{2}\theta(V_{\alpha})\nonumber \\
 & \quad g(\bm{X}_{\lambda\bm{\alpha}}^{+},\bm{X}_{(\lambda-\sigma)\bm{\alpha}}^{+})f_{*}(\bm{X}_{(\lambda-\sigma)\bm{\alpha}}^{+})f(\bm{X}_{\lambda\bm{\alpha}}^{+}){d\lambda} d\Omega(\bm{\alpha})d\bm{\xi}_{*}d\bm{\xi},\displaybreak[0]\label{eq:4.5}\\
q_{i}^{(c)}= & -p_{ij}^{(c)}v_{j}+\frac{\sigma^{2}}{4m}\int{\int_{0}^{\sigma}}\alpha_{i}[(\bm{\xi}+\bm{\xi}_{*})\cdot\bm{\alpha}]V_{\alpha}^{2}\theta(V_{\alpha})\nonumber \\
 & \quad g(\bm{X}_{\lambda\bm{\alpha}}^{+},\bm{X}_{(\lambda-\sigma)\bm{\alpha}}^{+})f_{*}(\bm{X}_{(\lambda-\sigma)\bm{\alpha}}^{+})f(\bm{X}_{\lambda\bm{\alpha}}^{+}){d\lambda} d\Omega(\bm{\alpha})d\bm{\xi}_{*}d\bm{\xi}\displaybreak[0]\nonumber \\
= & \frac{\sigma^{2}}{4m}\int{\int_{0}^{\sigma}}\alpha_{i}[(\bm{c}+\bm{c}_{*})\cdot\bm{\alpha}]V_{\alpha}^{2}\theta(V_{\alpha})\nonumber \\
 & \quad g(\bm{X}_{\lambda\bm{\alpha}}^{+},\bm{X}_{(\lambda-\sigma)\bm{\alpha}}^{+})f_{*}(\bm{X}_{(\lambda-\sigma)\bm{\alpha}}^{+})f(\bm{X}_{\lambda\bm{\alpha}}^{+}){d\lambda} d\Omega(\bm{\alpha})d\bm{\xi}_{*}d\bm{\xi},\label{eq:4.6}
\end{align}\end{subequations}
see e.g., \cite{CL88,F99}. Here $\bm{c}=\bm{\xi}-\bm{v}$, $\bm{c}_{*}=\bm{\xi}_{*}-\bm{v}$,
and 
\begin{equation}
\psi^{\prime}-\psi=\begin{cases}
V_{\alpha}\alpha_{i}, & {\displaystyle (\psi=\xi_{i}),}\\
{\displaystyle \frac{1}{2}V_{\alpha}(\bm{\xi}+\bm{\xi}_{*})\cdot\bm{\alpha},} & ({\displaystyle \psi=\frac{1}{2}\bm{\xi}^{2}),}
\end{cases}\label{psi_p}
\end{equation}
have been used. 
{Note that, thanks to the factor $\chi_D$ in $g$,
the range of integration with respect to $\lambda$ is simply from $0$ to $\sigma$, 
regardless of the position $\bm{X}$ in $D$. }

To summarize, two expressions for the same quantity have been obtained.
For the quantity related to the energy, 
\begin{subequations}\label{ET}\begin{align}
 & \int\frac{1}{2}\bm{\xi}^{2}J_{ME}(f)d\bm{\xi}\nonumber \\
= & -\frac{\sigma^{2}}{2m}\int[(\bm{\xi}+\bm{\xi}_{*})\cdot\bm{\alpha}]V_{\alpha}^{2}\theta(V_{\alpha})g(\bm{X}_{\sigma\bm{\alpha}}^{+},\bm{X})f(\bm{X}_{\sigma\bm{\alpha}}^{+})f_{*}(\bm{X})d\Omega(\bm{\alpha})d\bm{\xi}d\bm{\xi}_{*},\label{eq:4.7}
\end{align}
and
\begin{equation}
  \int\frac{1}{2}\bm{\xi}^{2}J_{ME}(f)d\bm{\xi}
=-\frac{\partial}{\partial X_{i}}(p_{ij}^{(c)}v_{j}+q_{i}^{(c)}),\label{eq:4.8}
\end{equation}\end{subequations}
see \eqref{eq:4.3a} with \eqref{psi_p} and \eqref{gradient}; for the quantity related to the momentum,
\begin{subequations}\label{MT}\begin{align}
 & \int\xi_{i}J_{ME}(f)d\bm{\xi}\nonumber \\
= & -\frac{\sigma^{2}}{m}\int\alpha_{i}V_{\alpha}^{2}\theta(V_{\alpha})g(\bm{X}_{\sigma\bm{\alpha}}^{+},\bm{X})f(\bm{X}_{\sigma\bm{\alpha}}^{+})f_{*}(\bm{X})d\Omega(\bm{\alpha})d\bm{\xi}d\bm{\xi}_{*},\label{eq:4.9}
\end{align}
and
\begin{equation}
\int\xi_{i}J_{ME}(f)d\bm{\xi}=-\frac{\partial}{\partial X_{j}}p_{ij}^{(c)},\label{eq:4.10}
\end{equation}\end{subequations}
see \eqref{eq:4.3a} with \eqref{psi_p} and \eqref{eq:4.5}.

Finally by integrating (\ref{eq:4.7}) over the domain $D$ and recalling
(\ref{eq:g_g2}), it is seen that
\begin{align}
 & \int_{D}\int\frac{1}{2}\bm{\xi}^{2}J_{ME}(f)d\bm{\xi}d\bm{X}\nonumber \\
= & -\frac{\sigma^{2}}{2m}\int[(\bm{\xi}+\bm{\xi}_{*})\cdot\bm{\alpha}]V_{\alpha}^{2}\theta(V_{\alpha})g(\bm{X},\bm{X}_{\sigma\bm{\alpha}}^{-})f(\bm{X})f_{*}(\bm{X}_{\sigma\bm{\alpha}}^{-})d\bm{\xi}d\bm{\xi}_{*}d\Omega(\bm{\alpha})d\bm{X}\displaybreak[0]\nonumber \\
= & \frac{\sigma^{2}}{2m}\int[(\bm{\xi}+\bm{\xi}_{*})\cdot\bm{\beta}]V_{\beta}^{2}\theta(-V_{\beta})g(\bm{X},\bm{X}_{\sigma\bm{\beta}}^{+})f(\bm{X})f_{*}(\bm{X}_{\sigma\bm{\beta}}^{+})d\bm{\xi}d\bm{\xi}_{*}d\Omega(\bm{\beta})d\bm{X}\displaybreak[0]\nonumber \\
= & \frac{\sigma^{2}}{2m}\int[(\bm{\xi}+\bm{\xi}_{*})\cdot\bm{\beta}]V_{\beta}^{2}\theta(V_{\beta})g(\bm{X},\bm{X}_{\sigma\bm{\beta}}^{+})f_{*}(\bm{X})f(\bm{X}_{\sigma\bm{\beta}}^{+})d\bm{\xi}_{*}d\bm{\xi}d\Omega(\bm{\beta})d\bm{X}\displaybreak[0]\nonumber \\
= & -\int_{D}\int\frac{1}{2}\bm{\xi}^{2}J_{ME}(f)d\bm{\xi}d\bm{X},\label{eq:4.11}
\end{align}
where the position is shifted by $-\sigma\bm{\alpha}$ at the first
equality, (II) and (I) are applied respectively at the second and the third equality,
and (\ref{eq:4.7}) is used at the last equality. Hence 
\begin{equation}
\int_{D}\int\frac{1}{2}\bm{\xi}^{2}J_{ME}(f)d\bm{\xi}d\bm{X}=0,
\end{equation}
and by (\ref{eq:4.8}) 
\begin{equation}
-\int_{D}\frac{\partial}{\partial X_{i}}(p_{ij}^{(c)}v_{j}+q_{i}^{(c)})d\bm{X}=\int_{\partial D}(p_{ij}^{(c)}v_{j}+q_{i}^{(c)})n_{i}dS=0.
\end{equation}
Here the divergence theorem has been used and $\bm{n}$
is the inward unit normal to the surface $\partial D$. In the same
way, it can be shown that
\begin{equation}
\int_{D}\int\xi_{i}J_{ME}(f)d\bm{\xi}d\bm{X}=0,
\end{equation}
and by (\ref{eq:4.10})
\begin{equation}
-\int_{D}\frac{\partial}{\partial X_{j}}p_{ij}^{(c)}d\bm{X}=\int_{\partial D}p_{ij}^{(c)}n_{j}dS=0.
\end{equation}
\begin{lemma}
\label{lem:E transfer}In total, there are no collisional contributions
to the momentum and energy transport:
\begin{equation}
\int_{D}\int\xi_{i}J_{ME}(f)d\bm{\xi}d\bm{X}=0,\quad\int_{D}\int\frac{1}{2}\bm{\xi}^{2}J_{ME}(f)d\bm{\xi}d\bm{X}=0.\label{eq:lem1-a}
\end{equation}
Accordingly, there are no collisional contributions to the net momentum
and energy transport to the surface $\partial D$:
\begin{equation}
\int_{\partial D}p_{ij}^{(c)}n_{j}dS=0,\quad\int_{\partial D}(p_{ij}^{(c)}v_{j}+q_{i}^{(c)})n_{i}dS=0.\label{eq:lem1-b}
\end{equation}
In particular, if $D$ is convex, $p_{ij}^{(c)}\equiv0$ and $q_{i}^{(c)}\equiv0$
on the surface $\partial D$.
\end{lemma}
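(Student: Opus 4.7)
The first identity of \eqref{eq:lem1-a} is essentially the content of \eqref{eq:4.11}, and I would simply reproduce the same chain of equalities for $\int_D\int\xi_i J_{ME}(f)\,d\bm{\xi}\,d\bm{X}$ starting from \eqref{eq:4.9}. The recipe is: shift the position variable by $-\sigma\bm{\alpha}$, which is legitimate over the whole space because the cutoff $\chi_D$ built into $g$ keeps the support inside $D$; apply operation (II) (replace $\bm{\alpha}$ by $-\bm{\beta}$) to turn $\theta(V_\alpha)$ into $\theta(-V_\beta)$ while flipping the sign of $\alpha_i$; apply operation (I) (swap $\bm{\xi}\leftrightarrow\bm{\xi}_*$) to restore $\theta(V_\beta)$ and interchange the two $f$'s; and finally invoke the symmetry $g(\bm{X},\bm{Y})=g(\bm{Y},\bm{X})$. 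The outcome is the negative of the starting expression, so the integral must vanish. Granted \eqref{eq:lem1-a}, the surface identities \eqref{eq:lem1-b} follow immediately by substituting the gradient representations \eqref{eq:4.8} and \eqref{eq:4.10} into \eqref{eq:lem1-a} and applying the divergence theorem.

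The convex case is the genuinely new content of the lemma. I would aim for pointwise vanishing of the integrands of \eqref{eq:4.5} and \eqref{eq:4.6} at each $\bm{X}\in\partial D$, observing that the factor $g$ carries the product $\chi_D(\bm{X}_{\lambda\bm{\alpha}}^{+})\chi_D(\bm{X}_{(\lambda-\sigma)\bm{\alpha}}^{+})$, so it suffices to show that at least one of these two factors vanishes for almost every $\bm{\alpha}$ and every $\lambda\in(0,\sigma)$. For this I would invoke the supporting hyperplane theorem: since $D$ is convex, there is a unit vector $\bm{\nu}$ at $\bm{X}$ such that $(\bm{Y}-\bm{X})\cdot\bm{\nu}\le 0$ for every $\bm{Y}\in D$. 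If $\bm{\alpha}\cdot\bm{\nu}>0$, then $(\bm{X}_{\lambda\bm{\alpha}}^{+}-\bm{X})\cdot\bm{\nu}=\lambda(\bm{\alpha}\cdot\bm{\nu})>0$ whenever $\lambda>0$, so $\bm{X}_{\lambda\bm{\alpha}}^{+}$ falls outside $D$ and the first $\chi_D$ vanishes; for $\bm{\alpha}\cdot\bm{\nu}<0$ the mirror argument kills the second $\chi_D$ whenever $\lambda<\sigma$; and the residual set $\{\bm{\alpha}\cdot\bm{\nu}=0\}$ has surface measure zero in $d\Omega(\bm{\alpha})$. This yields $p_{ij}^{(c)}=q_i^{(c)}=0$ pointwise on $\partial D$.

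I do not anticipate any serious obstacle: \eqref{eq:lem1-a}--\eqref{eq:lem1-b} reduce to bookkeeping along the lines of \eqref{eq:4.11}, and the convex case reduces to a one-line geometric observation. The subtle point worth flagging is the role of the cutoffs $\chi_D$ inside $g$: they are what permits the spatial shift in the first step above to be performed globally without boundary corrections, and they are also what turns the convex argument into a clean consequence of a supporting hyperplane rather than a case analysis near $\partial D$.
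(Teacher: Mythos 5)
Your proposal is correct and follows essentially the same route as the paper: the identities \eqref{eq:lem1-a} and \eqref{eq:lem1-b} are obtained exactly as in the chain \eqref{eq:4.11} together with \eqref{eq:4.8} and \eqref{eq:4.10}, and the convexity claim rests on the same geometric fact that for $\bm{X}\in\partial D$ the product of the two $\chi_D$ factors vanishes except for $\bm{\alpha}$ tangential to $\partial D$, a set of zero solid-angle measure. Your supporting-hyperplane formulation is merely a slightly more explicit rendering of the paper's ``flat at $\bm{X}$ / tangential directions'' remark.
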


\begin{proof}
Equations (\ref{eq:lem1-a}) and (\ref{eq:lem1-b}) are simply a summary
of the present section. When $D$ is convex, $\chi_{D}(\bm{X}_{+(\lambda-\sigma)\bm{\alpha}})\chi_{D}(\bm{X}_{+\lambda\bm{\alpha}})=0$
for $\bm{X}\in\partial D$, except for the special case that
$\partial D$ is flat at $\bm{X}$. However, the exception occurs only for $\bm{\alpha}$ in the directions tangential to $\partial D$ and thus has no contribution to the integration of the angle in (\ref{eq:4.5}) and (\ref{eq:4.6}). 
\end{proof}
\begin{remark}
Equation~(\ref{eq:lem1-a}) in Lemma~\ref{lem:E transfer}
is physically a natural consequence, since the collisional transport
of momentum and energy comes from interactions within gas molecules.
The collisional stress tensor $p_{ij}^{(c)}$ and heat flow $q_{i}^{(c)}$
are, however, not likely to vanish pointwisely on the surface $\partial D$
if the domain $D$ is not convex.
\end{remark}

\section{H function\label{sec:H-function}}

In this section, we shall recall the discussions on the H theorem
in the literature \cite{R78,MGB18,DBK21}. Consider first the so-called
kinetic part of the H function%
\footnote{To be precise, it is necessary to make the argument of the logarithmic function dimensionless, like $\ln (f/c_0)$ with a constant $c_0$ having the same dimension as $f$.
We, however, leave the argument dimensional 
to avoid additional calculations that do not affect the results.} 
\begin{equation}
\mathcal{H}^{(k)}\equiv\int_{D}\int f{\ln f} d\bm{\xi}d\bm{X},\label{H_kinetic}
\end{equation}
Then, multiplying $1+{\ln f}$ with the modified Enskog
equation (\ref{eq:2.1}) gives
\begin{equation}
\frac{\partial}{\partial t}\langle f{\ln f}\rangle+\frac{\partial}{\partial X_{i}}\langle\xi_{i}f{\ln f}\rangle=\langle J_{ME}(f){\ln f}\rangle,\label{eq:27}
\end{equation}
after the integration with respect to $\bm{\xi}$, where $\langle\bullet\rangle=\int\bullet\ d\bm{\xi}$. The first step
toward the H theorem is to apply (\ref{eq:mom_J}) with ${\varphi=\ln f}$
to the right-hand side:
\begin{align}
 & \frac{m}{\sigma^{2}}\langle J_{ME}(f){\ln f}\rangle\nonumber \\
= & \int\ln[f^{\prime}(\bm{X})/f(\bm{X})]g(\bm{X}_{\sigma\bm{\alpha}}^{-},\bm{X})f_{*}(\bm{X}_{\sigma\bm{\alpha}}^{-})f(\bm{X})V_{\alpha}\theta(V_{\alpha})d\Omega(\bm{\alpha})d\bm{\xi}_{*}d\bm{\xi}.\label{eq:5.2}
\end{align}
Then, the integration of (\ref{eq:5.2}) over the domain $D$ is again
a relevant step for the position shift by $+\sigma\bm{\alpha}$ and
gives
\begin{align}
 & \frac{m}{\sigma^{2}}\int_{D}\langle J_{ME}(f){\ln f}\rangle d\bm{X}\nonumber \\
= & \int\ln[f^{\prime}(\bm{X}_{\sigma\bm{\alpha}}^{+})/f(\bm{X}_{\sigma\bm{\alpha}}^{+})]g(\bm{X},\bm{X}_{\sigma\bm{\alpha}}^{+})f_{*}(\bm{X})f(\bm{X}_{\sigma\bm{\alpha}}^{+})V_{\alpha}\theta(V_{\alpha})d\Omega(\bm{\alpha})d\bm{\xi}_{*}d\bm{\xi}d\bm{X}\displaybreak[0]\nonumber \\
= & \int\ln[f_{*}^{\prime}(\bm{X}_{\sigma\bm{\alpha}}^{-})/f_{*}(\bm{X}_{\sigma\bm{\alpha}}^{-})]g(\bm{X},\bm{X}_{\sigma\bm{\alpha}}^{-})f(\bm{X})f_{*}(\bm{X}_{\sigma\bm{\alpha}}^{-})V_{\alpha}\theta(V_{\alpha})d\Omega(\bm{\alpha})d\bm{\xi}d\bm{\xi}_{*}d\bm{X}\displaybreak[0]\nonumber \\
= & \frac{1}{2}\int\ln\Big(\frac{f_{*}^{\prime}(\bm{X}_{\sigma\bm{\alpha}}^{-})f^{\prime}(\bm{X})}{f_{*}(\bm{X}_{\sigma\bm{\alpha}}^{-})f(\bm{X})}\Big)g(\bm{X},\bm{X}_{\sigma\bm{\alpha}}^{-})f(\bm{X})f_{*}(\bm{X}_{\sigma\bm{\alpha}}^{-})V_{\alpha}\theta(V_{\alpha})d\Omega(\bm{\alpha})d\bm{\xi}d\bm{\xi}_{*}d\bm{X},
\end{align}
where (II) and (I) are applied at the second equality, while the third
line and (\ref{eq:5.2}) are combined at the last equality. Since
for any $x,y>0$
\begin{equation}
x\ln(y/x)\le y-x,\label{eq:log}
\end{equation}
where equality holds if and only if $y=x$,
\begin{equation}
\int_{D}\langle J_{ME}(f){\ln f}\rangle d\bm{X}\le I(t),
\end{equation}
holds, where 
\begin{equation}
I(t)=\frac{\sigma^{2}}{2m}\int g(\bm{X},\bm{X}_{\sigma\bm{\alpha}}^{-})[f_{*}^{\prime}(\bm{X}_{\sigma\bm{\alpha}}^{-})f^{\prime}(\bm{X})-f(\bm{X})f_{*}(\bm{X}_{\sigma\bm{\alpha}}^{-})]V_{\alpha}\theta(V_{\alpha})d\Omega(\bm{\alpha})d\bm{\xi}d\bm{\xi}_{*}d\bm{X}.\label{eq:I}
\end{equation}
Equation~\eqref{eq:I} can be transformed as
\begin{align}
I(t)= & -\frac{\sigma^{2}}{2m}\int g(\bm{X},\bm{X}_{\sigma\bm{\alpha}}^{-})f_{*}^{\prime}(\bm{X}_{\sigma\bm{\alpha}}^{-})f^{\prime}(\bm{X})V_{\alpha}^{\prime}\theta(-V_{\alpha}^{\prime})d\Omega(\bm{\alpha})d\bm{\xi}d\bm{\xi}_{*}d\bm{X}\nonumber \\
 & -\frac{\sigma^{2}}{2m}\int g(\bm{X},\bm{X}_{\sigma\bm{\alpha}}^{-})f(\bm{X})f_{*}(\bm{X}_{\sigma\bm{\alpha}}^{-})V_{\alpha}\theta(V_{\alpha})d\Omega(\bm{\alpha})d\bm{\xi}d\bm{\xi}_{*}d\bm{X}\displaybreak[0]\nonumber \\
= & \frac{\sigma^{2}}{2m}\int g(\bm{X},\bm{X}_{\sigma\bm{\alpha}}^{-})f(\bm{X})f_{*}(\bm{X}_{\sigma\bm{\alpha}}^{-})[(\bm{\xi}-\bm{\xi}_{*})\cdot\bm{\alpha}]d\Omega(\bm{\alpha})d\bm{\xi}d\bm{\xi}_{*}d\bm{X}\displaybreak[0]\nonumber \\
= & \frac{\sigma^{2}}{2m}\int g(\bm{X},\bm{X}_{\sigma\bm{\alpha}}^{-})\rho(\bm{X})\rho(\bm{X}_{\sigma\bm{\alpha}}^{-})\bm{v}(\bm{X})\cdot\bm{\alpha}d\Omega(\bm{\alpha})d\bm{X}\nonumber \\
 & -\frac{\sigma^{2}}{2m}\int g(\bm{X},\bm{X}_{\sigma\bm{\alpha}}^{-})\rho(\bm{X})\rho(\bm{X}_{\sigma\bm{\alpha}}^{-})\bm{v}(\bm{X}_{\sigma\bm{\alpha}}^{-})\cdot\bm{\alpha}d\Omega(\bm{\alpha})d\bm{X}\displaybreak[0]\nonumber \\
= & \frac{\sigma^{2}}{2m}\int g(\bm{X}_{\sigma\bm{\alpha}}^{+},\bm{X})\rho(\bm{X}_{\sigma\bm{\alpha}}^{+})\rho(\bm{X})\bm{v}(\bm{X}_{\sigma\bm{\alpha}}^{+})\cdot\bm{\alpha}d\Omega(\bm{\alpha})d\bm{X}\nonumber \\
 & +\frac{\sigma^{2}}{2m}\int g(\bm{X},\bm{X}_{\sigma\bm{\alpha}}^{+})\rho(\bm{X})\rho(\bm{X}_{\sigma\bm{\alpha}}^{+})\bm{v}(\bm{X}_{\sigma\bm{\alpha}}^{+})\cdot\bm{\alpha}d\Omega(\bm{\alpha})d\bm{X}\displaybreak[0]\nonumber \\
= & \frac{\sigma^{2}}{m}\int g(\bm{X},\bm{X}_{\sigma\bm{\alpha}}^{+})\rho(\bm{X})\rho(\bm{X}_{\sigma\bm{\alpha}}^{+})\bm{v}(\bm{X}_{\sigma\bm{\alpha}}^{+})\cdot\bm{\alpha}d\Omega(\bm{\alpha})d\bm{X},\label{eq:5.4}
\end{align}
where $V_{\alpha}^{\prime}\equiv(\bm{\xi}_{*}^{\prime}-\bm{\xi}^{\prime})\cdot\bm{\alpha}=-V_{\alpha}$
is used at the first equality, (III) is used at the second equality,
the integration with respect to $\bm{\xi}$ and $\bm{\xi}_{*}$ is
performed at the third equality, and the shift operation by $+\sigma\bm{\alpha}$
and (II) are used at the fourth equality. The last line of (\ref{eq:5.4})
is further transformed as
\begin{align}
I(t)= & \frac{\sigma^{2}}{m}\int g(\bm{X},\bm{X}_{\sigma\bm{\alpha}}^{+})\rho(\bm{X})\rho(\bm{X}_{\sigma\bm{\alpha}}^{+})\bm{v}(\bm{X}_{\sigma\bm{\alpha}}^{+})\cdot\bm{\alpha}d\Omega(\bm{\alpha})d\bm{X}\displaybreak[0]\notag\\
= & \frac{\sigma^{2}}{m}\int\delta(|\bm{X}-\bm{Y}|-\sigma)g(\bm{X},\bm{Y})\rho(\bm{X})\rho(\bm{Y})\bm{v}(\bm{Y})\cdot\frac{\bm{Y}-\bm{X}}{\sigma^{2}|\bm{Y}-\bm{X}|}d\bm{Y}d\bm{X}\displaybreak[0]\notag\\
= & \frac{1}{m}\int g(\bm{X},\bm{Y})\rho(\bm{X})\rho(\bm{Y})\bm{v}(\bm{Y})\cdot\frac{\partial}{\partial\bm{Y}}\theta(|\bm{X}-\bm{Y}|-\sigma)d\bm{Y}d\bm{X}\displaybreak[0]\notag\\
= & \frac{1}{m}\int_{D\times D}g_{2}(\bm{X},\bm{Y})\rho(\bm{X})\rho(\bm{Y})\bm{v}(\bm{X})\cdot\frac{\partial}{\partial\bm{X}}\theta(|\bm{X}-\bm{Y}|-\sigma)d\bm{X}d\bm{Y},
\end{align}
and the last line is reduced by (\ref{eq:reduction}) in Appendix~\ref{sec:Correlation-function} to
\begin{align}
I(t) & =\int_{D}\rho\bm{v}\cdot\frac{\partial}{\partial\bm{X}}{\ln\frac{\rho}{w}}d\bm{X}\notag\displaybreak[0]\\
 & =-\int_{\partial D}\rho\bm{v}\cdot\bm{n}{\ln\frac{\rho}{w}}dS-\int_{D}({\ln\frac{\rho}{w}})\frac{\partial}{\partial\bm{X}}\cdot(\rho\bm{v})d\bm{X}\notag\displaybreak[0]\\
 & =\int_{D}\frac{\partial\rho}{\partial t}{\ln\frac{\rho}{w}}d\bm{X}\notag\displaybreak[0]\\
 & =\frac{d}{dt}\int_{D}\rho({\ln\frac{\rho}{w}}-1)d\bm{X}+\int_{D}\frac{\rho}{w}\frac{\partial w}{\partial t}d\bm{X}\notag\displaybreak[0]\\
 & =\frac{d}{dt}\Big(\int_{D}\rho{\ln\frac{\rho}{w}}d\bm{X}+m\ln\phi\Big).
\end{align}
Here $\bm{v}\cdot\bm{n}=0$ on $\partial D$, the continuity equation
(\ref{eq:continuity}), and the relation
\begin{align}
\frac{1}{\phi}\frac{d\phi}{dt}= & \frac{N}{\phi}\int_{D^{N}}\frac{\partial w(\bm{X}_{1})}{\partial t}w(\bm{X}_{2})\cdots w(\bm{X}_{N})\Theta(\bm{X}_{1},\cdots,\bm{X}_{N})d\bm{X}_{1}\cdots d\bm{X}_{N}\displaybreak[0]\nonumber \\
= & \frac{1}{m}\int_{D}\frac{\partial w(\bm{X}_{1})}{\partial t}\frac{\rho(\bm{X}_{1})}{w(\bm{X}_{1})}d\bm{X}_{1},\label{eq:phi_vs_w}
\end{align}
have been used; {see \eqref{eq:A.1}, \eqref{eq:A.2}, and \eqref{eq:Y}} in Appendix~\ref{sec:Correlation-function}, as for (\ref{eq:phi_vs_w}). Hence,
we finally arrive at
\begin{equation}
I(t)=-\frac{d\mathcal{H}^{(c)}}{dt},\label{I(t)}
\end{equation}
where $\mathcal{H}^{(c)}$ is a so-called collisional part of the H function
defined by
\begin{equation}
 \mathcal{H}^{(c)}(t)
 \equiv
-\int_{D}\rho(\bm{X}){\ln\frac{\rho(\bm{X})}{w(\bm{X})}}d\bm{X}-m\ln\phi.\label{eq:Hc}
\end{equation}
The total H function $\mathcal{H}\equiv\mathcal{H}^{(k)}+\mathcal{H}^{(c)}$
thus satisfies the following inequality:
\begin{equation}
\frac{d\mathcal{H}}{dt}+\int_{D}\frac{\partial}{\partial X_{i}}\langle\xi_{i}f{\ln f}\rangle d\bm{X}\le0,\label{eq:Htheorem}
\end{equation}
where the equality holds if and only if $f_{*}^{\prime}(\bm{X}_{\sigma\bm{\alpha}}^{-})f^{\prime}(\bm{X})=f_{*}(\bm{X}_{\sigma\bm{\alpha}}^{-})f(\bm{X})$.

{
\begin{remark}\label{rem2a}
The above $\mathcal{H}$ is bounded. See Appendix~\ref{boundedness}.
\end{remark}}

\begin{remark}\label{rem2}
If the system is isolated, the second term on the left-hand side of
(\ref{eq:Htheorem}) vanishes, and $\mathcal{H}$ monotonically decreases
in time as shown in \cite{MGB18}. Therefore, $-R\mathcal{H}$ is
identified as a natural extension of the thermodynamic entropy to
the case of non-equilibrium state. Equation~(\ref{eq:Htheorem})
combined with the following lemma, i.e., Lemma~\ref{lem:(Darrozes=002013Guiraud)-If-the}, can
be found in \cite[p.~270]{DBK21}.
\end{remark}

\begin{lemma}
\label{lem:(Darrozes=002013Guiraud)-If-the}(Darrozes--Guiraud \cite{DG66,C88,S07})
If the velocity distribution function $f$ satisfies the boundary
condition \eqref{KBC}, then it holds that
\begin{equation}
\int_{\partial D}\langle(\bm{\xi}\cdot\bm{n})f\ln\frac{f}{f_{w}}\rangle dS\le0,
\end{equation}
where $\bm{n}$ is the inward unit normal to the surface $\partial D$
and the equality holds if and only if $f=f_{w}$.
\end{lemma}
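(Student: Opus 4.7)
The plan is to establish the inequality pointwise at each $\bm{X}\in\partial D$ and then integrate. At a fixed boundary point, I would split $\langle(\bm{\xi}\cdot\bm{n})f\ln(f/f_w)\rangle$ into outgoing ($\bm{\xi}\cdot\bm{n}>0$) and incoming ($\bm{\xi}\cdot\bm{n}<0$) halves and bound the outgoing half by a Jensen-type estimate with the convex function $\Phi(x)=x\ln x$. The two structural ingredients that make this work are the non-negativeness of $K$ and the preservation of equilibrium: since $f_w$ satisfies (\ref{eq:2.10}), namely
\begin{equation*}
f_w(\bm{\xi})=\int_{\bm{\xi}_*\cdot\bm{n}<0}K(\bm{\xi},\bm{\xi}_*|\bm{X})f_w(\bm{\xi}_*)d\bm{\xi}_*,
\end{equation*}
the weights $\omega_{\bm{\xi}}(\bm{\xi}_*)\equiv K(\bm{\xi},\bm{\xi}_*|\bm{X})f_w(\bm{\xi}_*)/f_w(\bm{\xi})$ form a probability density in $\bm{\xi}_*$ on $\{\bm{\xi}_*\cdot\bm{n}<0\}$ for every outgoing $\bm{\xi}$. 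Dividing (\ref{eq:2.10}) for $f$ by $f_w(\bm{\xi})$ then expresses $f(\bm{\xi})/f_w(\bm{\xi})$ as the $\omega_{\bm{\xi}}$-average of $f(\bm{\xi}_*)/f_w(\bm{\xi}_*)$, so Jensen's inequality yields the pointwise bound $\Phi(f(\bm{\xi})/f_w(\bm{\xi}))\le\int\omega_{\bm{\xi}}(\bm{\xi}_*)\Phi(f(\bm{\xi}_*)/f_w(\bm{\xi}_*))d\bm{\xi}_*$.

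Next I would multiply this by the nonnegative flux weight $(\bm{\xi}\cdot\bm{n})f_w(\bm{\xi})$ and integrate over $\bm{\xi}\cdot\bm{n}>0$. The left-hand side is exactly the outgoing contribution $\int_{\bm{\xi}\cdot\bm{n}>0}(\bm{\xi}\cdot\bm{n})f\ln(f/f_w)d\bm{\xi}$. On the right-hand side, Fubini together with the normalization (\ref{eq:3.12}) reduces $\int_{\bm{\xi}\cdot\bm{n}>0}(\bm{\xi}\cdot\bm{n})K(\bm{\xi},\bm{\xi}_*|\bm{X})d\bm{\xi}$ to $|\bm{\xi}_*\cdot\bm{n}|$, producing $-\int_{\bm{\xi}_*\cdot\bm{n}<0}(\bm{\xi}_*\cdot\bm{n})f\ln(f/f_w)d\bm{\xi}_*$, i.e., minus the incoming contribution. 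Rearranging the two halves gives $\langle(\bm{\xi}\cdot\bm{n})f\ln(f/f_w)\rangle\le0$ at each $\bm{X}\in\partial D$, and a single integration over $\partial D$ closes the first assertion.

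For the equality case, the Jensen step saturates only when $f(\bm{\xi}_*)/f_w(\bm{\xi}_*)$ is $\bm{\xi}_*$-independent on the incoming half-space, so $f=cf_w$ there for some $c\ge0$; by (\ref{eq:2.10}) this constant propagates to outgoing $\bm{\xi}$, and since $f_w$ is defined only up to the arbitrary amplitude $a>0$, the identity $f=cf_w$ at $\bm{X}\in\partial D$ is precisely the statement ``$f=f_w$'' (with the uniqueness clause in the preservation-of-equilibrium property guaranteeing that no other Maxwellian works); the converse direction is immediate from $\ln 1=0$. The main point of care rather than a serious obstacle is the bookkeeping of absolute values and signs in (\ref{eq:3.12}) when flipping between the two half-spaces; keeping a clean distinction between the probability-measure interpretation of $\omega_{\bm{\xi}}$ and the flux weight $(\bm{\xi}\cdot\bm{n})f_w(\bm{\xi})$ is what makes the final sign come out correctly.
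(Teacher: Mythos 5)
Your proposal is correct and is precisely the classical Jensen-inequality argument for the Darrozes--Guiraud lemma found in the references the paper cites (\cite{DG66,C88,S07}); the paper itself states the lemma without proof, so there is nothing to compare beyond noting that your route is the standard one. The only point worth flagging is that strict saturation of Jensen gives $f/f_w$ constant only on the support of $K(\bm{\xi},\cdot\,|\bm{X})$, so the final identification $f=f_w$ does lean on the uniqueness clause in the preservation-of-equilibrium property, exactly as you observe.
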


\section{Main results: Free energy and its monotonicity\label{sec:Main-results:-Helmholtz}}

After the presentation of the known results \cite{R78,MGB18,BB18} in Sec.~\ref{sec:H-function},
we now discuss the thermal relaxation of a dense gas in a closed system
with the aid of Lemma~\ref{lem:E transfer}.
Consider the multiplication of $1+\ln(f/f_{w})$ with the modified
Enskog equation (\ref{eq:2.1}) and integrate it with respect to $\bm{\xi}$.
Since $f_{w}$ depends on neither $t$ nor $\bm{X}$, we have
\begin{equation}
\frac{\partial}{\partial t}\langle f\ln(f/f_{w})\rangle+\frac{\partial}{\partial X_{i}}\langle\xi_{i}f\ln(f/f_{w})\rangle=\langle\ln(f/f_{w})J_{ME}(f)\rangle.\label{eq:5.8}
\end{equation}
Since ${\ln f_{w}}=a_{w}-\bm{\xi}^{2}/(2RT_{w})$
with $a_{w}$ being a constant, the right-hand side of (\ref{eq:5.8})
is reduced to 
\begin{equation}
\langle\ln(f/f_{w})J_{ME}(f)\rangle=\langle J_{ME}(f){\ln f}\rangle+\frac{1}{2RT_{w}}\langle\bm{\xi}^{2}J_{ME}(f)\rangle.
\end{equation}
Once we integrate (\ref{eq:5.8}) with respect to $\bm{X}$ over the
domain $D$, the contribution from $\langle\bm{\xi}^{2}J_{ME}(f)\rangle$
vanishes by Lemma~\ref{lem:E transfer} and we arrive at
\begin{align}
\frac{d}{dt}\int_{D}\langle f\ln(f/f_{w})\rangle d\bm{X}= 
& \int_{\partial D}\langle\xi_{i}n_{i}f\ln(f/f_{w})\rangle dS
 +\int_{D}\langle J_{ME}(f){\ln f}\rangle d\bm{X}\nonumber \\
\le & \int_{\partial D}\langle\xi_{i}n_{i}f\ln(f/f_{w})\rangle dS-\frac{d\mathcal{H}^{(c)}}{dt}\le-\frac{d\mathcal{H}^{(c)}}{dt},
\end{align}
where Lemma~\ref{lem:(Darrozes=002013Guiraud)-If-the} has been used
at the last inequality. By transposing the most right-hand side to
the left-hand side, it is seen that $\mathcal{F}$ defined by 
\begin{equation}
\mathcal{F}\equiv RT_{w}(\int_{D}\langle f\ln(f/f_{w})\rangle d\bm{X}+\mathcal{H}^{(c)}),
\label{Free}
\end{equation}
decreases monotonically in time:
\begin{equation}
\frac{d\mathcal{F}}{dt}\le0,
\end{equation}
where the equality holds if and only if $f_{*}^{\prime}(\bm{X}_{\sigma\bm{\alpha}}^{-})f^{\prime}(\bm{X})=f_{*}(\bm{X}_{\sigma\bm{\alpha}}^{-})f(\bm{X})$
for $\bm{X},\ \bm{X}_{\sigma\bm{\alpha}}^{-}\in D$ and $f=f_{w}$
on $\partial D$; see the equality condition for (\ref{eq:Htheorem})
and in Lemma~\ref{lem:(Darrozes=002013Guiraud)-If-the}.
{Since $\mathcal{F}$ is bounded from below (see Appendix~\ref{boundedness}),
 $\mathcal{F}$ approaches a stationary value as $t\to\infty$.
The extension to the case of the modified Enskog--Vlasov equation
is discussed in Appendix~\ref{EV}.}

\begin{theorem}
(thermal relaxation in a closed system surrounded by a heat bath)
Suppose that the behavior of a dense gas in a closed system surrounded
by a heat bath with a constant temperature $T_{w}$ is described by
the modified Enskog equation \eqref{MEE} and the boundary condition
\eqref{KBC}. Then a quantity $\mathcal{F}$ defined by 
\begin{equation}
\mathcal{F}=RT_{w}(\int_{D}\langle f\ln(f/f_{w})\rangle d\bm{X}+\mathcal{H}^{(c)}),
\end{equation}
monotonically decreases in time {and approaches a stationary value as $t\to\infty$,} where $f_{w}$ and $\mathcal{H}^{(c)}$
are respectively defined by (\ref{eq:fw}) and (\ref{eq:Hc}).
\end{theorem}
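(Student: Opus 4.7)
The plan is to combine the H-theorem machinery from Section~\ref{sec:H-function} with the no-collisional-transport identities of Lemma~\ref{lem:E transfer} to obtain a one-sided dissipation inequality for a shifted H-function, using $f_w$ as the reference. First I would multiply the modified Enskog equation \eqref{eq:2.1} by $1+\ln(f/f_{w})$ and integrate with respect to $\bm{\xi}$. Since $f_{w}$ depends on neither $t$ nor $\bm{X}$, the left-hand side takes the conservation-law form with density $\langle f\ln(f/f_{w})\rangle$ and flux $\langle\xi_{i}f\ln(f/f_{w})\rangle$, while $\ln f_{w}=a_{w}-\bm{\xi}^{2}/(2RT_{w})$ splits the right-hand side into $\langle J_{ME}(f)\ln f\rangle$ and a constant multiple of $\langle\bm{\xi}^{2}J_{ME}(f)\rangle$.

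Next I would integrate over $D$. The $\bm{\xi}^{2}$-piece is annihilated by the second identity in \eqref{eq:lem1-a}; this is precisely where the closed-system structure enters, since the pointwise energy transfer $\langle\bm{\xi}^{2}J_{ME}(f)\rangle$ is not zero and would otherwise obstruct the use of $\ln(f/f_{w})$ as a multiplier. The $\langle J_{ME}(f)\ln f\rangle$-piece is bounded above by $-d\mathcal{H}^{(c)}/dt$ via the chain \eqref{eq:I}--\eqref{I(t)} already established in Section~\ref{sec:H-function}, which used the symmetries (I)--(III), the shift by $\sigma\bm{\alpha}$, the elementary inequality \eqref{eq:log}, the continuity equation \eqref{eq:continuity}, and the $\phi$--$w$ relation \eqref{eq:phi_vs_w}. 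The spatial divergence becomes the surface integral $\int_{\partial D}\langle(\xi_{i}n_{i})f\ln(f/f_{w})\rangle dS$, which is non-positive by Lemma~\ref{lem:(Darrozes=002013Guiraud)-If-the}. Regrouping these three pieces yields $(d/dt)\bigl[\int_{D}\langle f\ln(f/f_{w})\rangle d\bm{X}+\mathcal{H}^{(c)}\bigr]\le 0$, i.e.\ $d\mathcal{F}/dt\le 0$, with equality exactly when both $f_{*}'(\bm{X}^{-}_{\sigma\bm{\alpha}})f'(\bm{X})=f_{*}(\bm{X}^{-}_{\sigma\bm{\alpha}})f(\bm{X})$ holds in $D$ and $f=f_{w}$ holds on $\partial D$.

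For the long-time statement, I would invoke the boundedness from below of $\mathcal{F}$ recorded in Remark~\ref{rem2a} and Appendix~\ref{boundedness}: a monotone real function that is bounded below converges, so $\mathcal{F}(t)$ tends to a stationary value as $t\to\infty$.

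The main obstacle, absent the preparatory work, would be recasting the collisional H-production as the exact total derivative $-d\mathcal{H}^{(c)}/dt$, because this requires the shift trick together with the rather delicate interplay between the truncated correlation $g$, the two-point function $g_{2}$, and the normalization $\phi$ of Appendix~\ref{sec:Correlation-function}. Since Section~\ref{sec:H-function} has already supplied this reduction, and Lemma~\ref{lem:E transfer} supplies the vanishing of the $\bm{\xi}^{2}$-moment that makes the $f_{w}$-shift legitimate, the proof of the theorem is essentially an assembly of these two ingredients with the Darrozes--Guiraud boundary estimate, followed by the monotone-convergence remark.
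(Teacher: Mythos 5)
Your proposal is correct and follows essentially the same route as the paper: the $1+\ln(f/f_w)$ multiplier, the splitting of the collision moment using $\ln f_w=a_w-\bm{\xi}^2/(2RT_w)$, the annihilation of the $\bm{\xi}^2$-piece via Lemma~\ref{lem:E transfer}, the bound $\int_D\langle J_{ME}(f)\ln f\rangle d\bm{X}\le -d\mathcal{H}^{(c)}/dt$ from Section~\ref{sec:H-function}, the Darrozes--Guiraud estimate on the boundary flux, and the boundedness from Appendix~\ref{boundedness} for the long-time convergence. No substantive differences to report.
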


\begin{remark}
From \eqref{Free} and \eqref{H_kinetic}, $\mathcal{F}$ can be rewritten as
\begin{align}
\mathcal{F}=
 & RT_w(\int_D\langle f{\ln f}\rangle d\bm{X}
       -\int_D\langle f{\ln f_w}\rangle d\bm{X}+\mathcal{H}^{(c)})\notag\\
=& (\mathcal{H}^{(k)}+\mathcal{H}^{(c)})RT_w
+\int_D\langle \frac{1}{2}\bm{\xi}^2f\rangle d\bm{X}+\mathrm{const.}\label{eq:F-H}
\end{align}
Since $\int_{D}\langle\frac{1}{2}\bm{\xi}^{2}f\rangle d\bm{X}$ and
$-\mathcal{H}R$ are respectively the internal energy $E$ and the
entropy $S$ of the closed system (see Remark~\ref{rem2}), $\mathcal{F}$ is identified as
$E-T_{w}S$ up to an additive constant, i.e., an extension of the Helmholtz free energy in thermodynamics
to a non-equilibrium system. The present result shows that the same
statement for the Boltzmann equation mentioned in \cite[p.~270]{DBK21}
holds for the modified Enskog equation, thanks to Lemma~\ref{lem:E transfer}.
In the case of the Boltzmann equation, the consideration of Lemma~\ref{lem:E transfer}
was not required.
\end{remark}

{
When $d\mathcal{F}/dt=0$, two conditions
\begin{subequations}\label{eq:conditions}\begin{align}
\ln f_{*}^{\prime}(\bm{X}_{\sigma\bm{\alpha}}^{-})+\ln f^{\prime}(\bm{X})=\ln f_{*}(\bm{X}_{\sigma\bm{\alpha}}^{-})+\ln f(\bm{X}),\quad \textrm{for\ }\bm{X},\bm{X}^-_{\sigma\bm{\alpha}}\in D, \label{eq:cond1}\\
f(t,\bm{X},\bm{\xi})=\frac{\rho(t,\bm{X})}{(2\pi RT_w)^{3/2}}\exp(-\frac{\bm{\xi}^2}{2RT_w}),
\quad \textrm{for\ } \bm{X}\in \partial D,\label{eq:cond2}
\end{align}\end{subequations}
hold. 
On condition
that \eqref{eq:cond1} is identical to
\begin{equation}
\ln f(t,\bm{X},\bm{\xi})=b_{0}(t,\bm{X})+b_{i}(t)\xi_{i}+b_{4}(t)\bm{\xi}^{2}+c_{i}(t)\epsilon_{ijk}X_{j}\xi_{k},
\end{equation}
or equivalently to
\begin{equation}
f(t,\bm{X},\bm{\xi}) =\frac{\rho(t,\bm{X})}{(2\pi RT(t))^{3/2}}\exp(-\frac{(\bm{\xi}-\bm{v}(t,\bm{X}))^{2}}{2RT(t)}),\label{eq:restMax}
\end{equation}
with $\bm{v}(t,\bm{X})=\bm{V}(t)+\bm{X}\times\bm{W}(t)$  \cite{MGB18},
\eqref{eq:cond2} leads to $T(t)=T_{w}$ and $\bm{v}(t,\bm{X})=0$.
Furthermore, $\rho$ is independent of $t$ because of the continuity equation (\ref{eq:continuity}) with $\bm{v}=\bm{0}$. Therefore, when
$d\mathcal{F}/dt=0$,
$f$ is a time-independent resting Maxwellian
\begin{equation}
\frac{\rho(\bm{X})}{(2\pi RT_{w})^{3/2}}\exp(-\frac{\bm{\xi}^{2}}{2RT_{w}}),
\end{equation}
which represents the thermal equilibrium state with the heat bath characterized
by the uniform temperature $T_{w}$.}

\section{Conclusion\label{sec:Conclusion}}

{In the present work,} the thermal relaxation of a dense gas in a closed system surrounded by
a heat bath has been studied on the basis of the modified Enskog equation.
The H theorem established by Resibois \cite{R78} for the infinite
domain and for a periodic domain and then later by Maynar et al. \cite{MGB18}
for a bounded domain surrounded by the specular-reflection wall has been
arranged in a form suitable for a closed system surrounded by a heat bath.
The case of the modified Enskog--Vlasov equation
{has also been} considered in Appendix~\ref{EV}.
{Different} from the case of the Boltzmann equation, it is required to pay attention to
collisional contributions to the momentum and the energy transport.
We have confirmed, however, that their net contributions on the boundary vanish.
It is physically natural in view of the origin of those transports. As
the result, the Darrozes--Guiraud inequality {plays the same role
as} in the case of the Boltzmann equation to find a quantity $\mathcal{F}$
that {corresponds to} the Helmholtz free energy in the thermodynamics.
This quantity has been shown to {be bounded and to} decrease monotonically in time.

\appendix

\section{{\textit{N}-particle distribution and} correlation function $g_{2}$\label{sec:Correlation-function}}

In the case of the modified Enskog equation, {the $N$-particle (factorized) distribution function $\rho_N$ is introduced:
\begin{equation}
\rho_N=\frac{1}{\phi(t)}\Theta(\bm{X}_1,\cdots,\bm{X}_N)
       W(t,\bm{X}_1,\bm{\xi}_1)\cdots W(t,\bm{X}_N,\bm{\xi}_N),
\end{equation}
and} the velocity distribution function $f$ is expressed in terms of $\rho_N$:
\begin{align}
  f(t,\bm{X}_{1},\bm{\xi}_{1})
=&mN\int_{(D\times\mathbb{R}^3)^{(N-1)}}\rho_N(t,\bm{Z}_{1},\dots,\bm{Z}_{N})d\bm{Z}_{2}\cdots d\bm{Z}_{N}\notag\\
=&{ \frac{mN}{\phi(t)}W(t,\bm{X}_{1},\bm{\xi}_{1}) Y(t,\bm{X}_1),}\label{eq:factorize}
\end{align}{
where and in what follows $\bm{Z}_i=(\bm{X}_i,\bm{\xi}_i)$, $(D\times\mathbb{R}^3)^{N}$ (or $D^{N}$) is the $N$-times direct multiple of $D\times\mathbb{R}^3$ (or $D$),} $N$ is
the number of molecules in $D$,
and
\begin{subequations}\label{eq:partition}\begin{align}
& {Y(t,\bm{X}_1)=\int_{D^{N-1}}w(t,\bm{X}_{2})\cdots w(t,\bm{X}_{N})\Theta(\bm{X}_{1},\cdots,\bm{X}_{N})d\bm{X}_{2}\cdots d\bm{X}_{N},}\label{eq:A.1}\\
 & \phi(t)=\int_{D^{N}}w(t,\bm{X}_{1})\cdots w(t,\bm{X}_{N})\Theta(\bm{X}_{1},\cdots,\bm{X}_{N})d\bm{X}_{1}\cdots d\bm{X}_{N},\displaybreak[0]\label{eq:A.2}\\
 & w(t,\bm{X})=\int W(t,\bm{X},\bm{\xi})d\bm{\xi},\displaybreak[0]\label{eq:A.3}\\
 & \Theta(\bm{X}_{1},\cdots,\bm{X}_{N})=\prod_{i=1}^{N}\prod_{j>i}^{N}\theta(|\bm{X}_{ij}|-\sigma),\quad\bm{X}_{ij}=\bm{X}_{i}-\bm{X}_{j}.\label{eq:A.4}
\end{align}\end{subequations}
{Note that $\rho_N$ is normalized as
\begin{equation}
\int_{(D\times\mathbb{R}^{3})^N} \rho_N d\bm{Z}_1\cdots d\bm{Z}_N=1,
\end{equation}
and the density $\rho$ is also expressed as
\begin{equation}
\rho(t,\bm{X})=\frac{mN}{\phi(t)}w(t,\bm{X})Y(t,\bm{X}),\label{eq:Y}
\end{equation}
by a simple integration of \eqref{eq:factorize} with respect to $\bm{\xi}_1$.}

The correlation function $g_{2}$ {in \eqref{eq:g_g2}} is then defined in terms of the quantities in \eqref{eq:partition} as\footnote{In the literature, $\Theta$ is often used in place of $\Theta_{(1,2)}$ in the definition of $g_2$. The definition (\ref{eq:g2}) is adopted in order to avoid any ambiguity occurring
in the derivation of \eqref{I(t)}.
}\begin{subequations}\label{eq:PairFunc}
\begin{align}
 & g_{2}(t,\bm{X}_{1},\bm{X}_{2})\nonumber \\
= & \frac{m^{2}N(N-1)}{\phi(t)}\frac{w(t,\bm{X}_{1})w(t,\bm{X}_{2})}{\rho(t,\bm{X}_{1})\rho(t,\bm{X}_{2})}\nonumber \\
 & \quad\times\int_{D^{N-2}}w(t,\bm{X}_{3})\cdots w(t,\bm{X}_{N})\Theta_{(1,2)}(\bm{X}_{1},\cdots,\bm{X}_{N})d\bm{X}_{3}\cdots d\bm{X}_{N},\label{eq:g2}
\end{align}
where
\begin{equation}
\Theta_{(1,2)}(\bm{X}_{1},\cdots,\bm{X}_{N})=\prod_{i=1}^{N}\prod_{j>\max(i,2)}^{N}\theta(|\bm{X}_{ij}|-\sigma).\label{eq:Theta12}
\end{equation}
Note that
\begin{equation}
\Theta(\bm{X}_{1},\cdots,\bm{X}_{N})=\theta(|\bm{X}_{12}|-\sigma)\Theta_{(1,2)}(\bm{X}_{1},\cdots,\bm{X}_{N}),\label{eq:A.7}
\end{equation}
\end{subequations}by \eqref{eq:A.4} {and \eqref{eq:Theta12}.
By \eqref{eq:Y} with \eqref{eq:A.1},
$\rho$ can be regarded as a functional of $w$ and, if invertible, vice versa. Hence, $\phi$ and $g_{2}$ can also be regarded as functionals of $\rho$. 
It is seen from (\ref{eq:A.7}) that} 
\begin{align}
 & \int_{D^{N-1}}\Theta_{(1,2)}(\bm{X}_{1},\dots,\bm{X}_{N})\frac{\partial}{\partial\bm{X}_{1}}\theta(|\bm{X}_{12}|-\sigma)F(\bm{X}_{2},\dots,\bm{X}_{N})d\bm{X}_{2}\dots d\bm{X}_{N}\nonumber \\
= & \frac{1}{N-1}\frac{\partial}{\partial\bm{X}_{1}}\int_{D^{N-1}}\Theta_{(1,2)}(\bm{X}_{1},\dots,\bm{X}_{N})\theta(|\bm{X}_{12}|-\sigma)F(\bm{X}_{2},\dots,\bm{X}_{N})d\bm{X}_{2}\dots d\bm{X}_{N}\nonumber \\
= & \frac{1}{N-1}\frac{\partial}{\partial\bm{X}_{1}}\int_{D^{N-1}}\Theta(\bm{X}_{1},\dots,\bm{X}_{N})F(\bm{X}_{2},\dots,\bm{X}_{N})d\bm{X}_{2}\dots d\bm{X}_{N},\label{eq:formula}
\end{align}
if $F(\bm{X}_{2},\dots,\bm{X}_{N})$ is a function such that 
\begin{equation}
F(\bm{X}_{2},\dots,\bm{X}_{i}\dots,\bm{X}_{j}\dots,\bm{X}_{N})=F(\bm{X}_{2},\dots,\bm{X}_{j}\dots,\bm{X}_{i}\dots,\bm{X}_{N}),
\end{equation}
for $\forall i,j\in\{2,\dots,N\}$.

{Now, thanks to (\ref{eq:formula}),
the reduction used in Sec.~\ref{sec:H-function} is possible as follows:}
\begin{align}
 & \frac{1}{m}\int_{D}\rho(\bm{X}_{2})\rho(\bm{X}_{1})g_{2}(\bm{X}_{1},\bm{X}_{2})\frac{\partial}{\partial\bm{X}_{1}}\theta(|\bm{X}_{12}|-\sigma)d\bm{X}_{2}\nonumber \\
= & \frac{mN(N-1)}{\phi(t)}w(\bm{X}_{1})\int_{D^{N-1}}w(\bm{X}_{2})\cdots w(\bm{X}_{N})\nonumber \\
 & \quad\times\Theta_{(1,2)}(\bm{X}_{1},\cdots,\bm{X}_{N})\frac{\partial}{\partial\bm{X}_{1}}\theta(|\bm{X}_{12}|-\sigma)d\bm{X}_{2}\cdots d\bm{X}_{N}\displaybreak[0]\nonumber \\
= & w(\bm{X}_{1})\frac{\partial}{\partial\bm{X}_{1}}\{\frac{mN}{\phi(t)}\int_{D^{N-1}}w(\bm{X}_{2})\cdots w(\bm{X}_{N})\Theta(\bm{X}_{1},\cdots,\bm{X}_{N})d\bm{X}_{2}\cdots d\bm{X}_{N}\}\displaybreak[0]\nonumber \\
= & w(\bm{X}_{1})\frac{\partial}{\partial\bm{X}_{1}}\frac{\rho(\bm{X}_{1})}{w(\bm{X}_{1})}=\rho(\bm{X}_{1})\frac{\partial}{\partial\bm{X}_{1}}\ln\frac{\rho(\bm{X}_{1})}{w(\bm{X}_{1})},\label{eq:reduction}
\end{align}
where {\eqref{eq:g2}, \eqref{eq:A.1}, and \eqref{eq:Y} have been used} and the argument $t$ is
omitted from $\rho$ and $w$.

{
\section{Boundedness of $\mathcal{F}$\label{boundedness}}
In this Appendix, we will show that $\mathcal{F}$ is bounded.

With the preparations in Appendix~\ref{sec:Correlation-function}, we first show that $\mathcal{H}$ occurring in \eqref{eq:Htheorem} is identical to the following $H$: \cite{R78,MGB18}
\begin{equation}
H(t)=m\int_{(D\times\mathbb{R}^{3})^N} \rho_N\ln\rho_N d\bm{Z}_1\cdots d\bm{Z}_N.\label{eq:Horiginal}
\end{equation}
Indeed, since $\Theta\ln\Theta\equiv 0$,
the integrations with respect to $\bm{Z}_2,\cdots,\bm{Z}_N$ are simplified to yield   
\begin{align}
H(t)=&m\int_{D^N\times\mathbb{R}^{3N}} \rho_N(\sum_{i=1}^N\ln W(t,\bm{X}_i,\bm{\xi}_i)-\ln\phi) d\bm{X}_1\cdots d\bm{X}_Nd\bm{\xi}_1\cdots d\bm{\xi}_N \notag\displaybreak[0]\\
=&\int_{D\times\mathbb{R}^{3}} f(t,\bm{X}_1,\bm{\xi}_1)\ln W(t,\bm{X}_1,\bm{\xi}_1) d\bm{X}_1d\bm{\xi}_1 -m\ln\phi.\label{eq:HH}
\end{align}
Because of \eqref{eq:factorize} and \eqref{eq:Y}, 
\begin{equation}
\ln W=\ln f-\ln \frac{\rho}{w},\label{eq:lnW}
\end{equation}
and substitution to \eqref{eq:HH} leads to
\begin{align}
H(t)=&\mathcal{H}^{(k)}-\int_{D\times\mathbb{R}^{3}} f(t,\bm{X}_1,\bm{\xi}_1)\ln\frac{\rho(t,\bm{X}_1)}{w(t,\bm{X}_1)} d\bm{X}_1d\bm{\xi}_1 -m\ln\phi\notag\\
 =&\mathcal{H}^{(k)}-\int_{D} \rho(t,\bm{X}_1)\ln\frac{\rho(t,\bm{X}_1)}{w(t,\bm{X}_1)} d\bm{X}_1 -m\ln\phi \notag\\
 =&\mathcal{H}^{(k)}+\mathcal{H}^{(c)}=\mathcal{H}.
\end{align}

Now, thanks to the form  \eqref{eq:Horiginal}, 
the same method as the case of the Boltzmann equation (see, e.g., \cite[Sec.~9.4]{CIP94})
is available to show that $\mathcal{F}$ is bounded from below, which is as follows.
As $x$ increases from $x=0$, $x\ln x$ first monotonically decreases and reaches the minimum at $x=e^{-1}$, and then increases monotonically for $x>e^{-1}$.
Hence, if $\rho_N\ge e^{-1}$, $\rho_N\ln\rho_N \ge -\rho_N$. 
If $\rho_N< e^{-1}$, we split this case into
(i) $\rho_N\ge(4\pi RT_w)^{-3N/2}V_D^{-N}\exp(-\sum_{i=1}^N \frac{\bm{\xi}_i^2}{4RT_w})$ and 
(ii) $\rho_N<(4\pi RT_w)^{-3N/2}V_D^{-N}\exp(-\sum_{i=1}^N \frac{\bm{\xi}_i^2}{4RT_w})$,
where $V_D$ is the volume of $D$.
In case (i), $\rho_N\ln\rho_N \ge \rho_N[-(3N/2)\ln(4\pi RT_w)-N\ln V_D-\sum_{i=1}^N \frac{\bm{\xi}_i^2}{4RT_w}]$; in case (ii), $\rho_N\ln\rho_N>(4\pi RT_w)^{-3N/2}V_D^{-N}\exp(-\sum_{i=1}^N \frac{\bm{\xi}_i^2}{4RT_w}) [-(3N/2)\ln(4\pi RT_w)-N\ln V_D-\sum_{j=1}^N \frac{\bm{\xi}_j^2}{4RT_w}]$. Consequently, it holds that
\begin{align}
\rho_N\ln\rho_N\ge
&-\rho_N-\rho_N N\ln[(4\pi RT_w)^{3/2}V_D]-\rho_N\sum_{j=1}^N \frac{\bm{\xi}_j^2}{4RT_w}\notag
\displaybreak[0]\\
&-\sum_{i=1}^N \frac{\bm{\xi}_i^2}{4RT_w}
  \frac{1}{(4\pi RT_w)^{3N/2}V_D^N}\exp(-\sum_{j=1}^N \frac{\bm{\xi}_j^2}{4RT_w}) \displaybreak[0]\notag\\
&-\frac{N\ln[(4\pi RT_w)^{3/2}V_D]}{(4\pi RT_w)^{3N/2}V_D^N}\exp(-\sum_{i=1}^N \frac{\bm{\xi}_i^2}{4RT_w}),
\end{align}
by which $H$ is evaluated as 
\begin{align}
H(t)=&m\int_{(D\times \mathbb{R}^3)^N} \rho_N\ln\rho_N d\bm{Z}_1\cdots d\bm{Z}_N \notag \\
  \ge&-m\int_{(D\times \mathbb{R}^3)^N}
  \{\rho_N+\rho_NN\ln[(4\pi RT_w)^{3/2}V_D]
   +\rho_N\sum_{j=1}^N \frac{\bm{\xi}_j^2}{4RT_w}\notag \displaybreak[0]\\
&+\sum_{i=1}^N \frac{\bm{\xi}_i^2}{4RT_w}
  \frac{1}{(4\pi RT_w)^{3N/2}V_D^N}\exp(-\sum_{j=1}^N \frac{\bm{\xi}_j^2}{4RT_w}) \displaybreak[0]\notag\\
&+\frac{N\ln[(4\pi RT_w)^{3/2}V_D]}{(4\pi RT_w)^{3N/2}V_D^N}\exp(-\sum_{i=1}^N \frac{\bm{\xi}_i^2}{4RT_w})\}d\bm{Z}_1\cdots d\bm{Z}_N \notag\displaybreak[0]\\
  =&-m-2mN\ln[(4\pi RT_w)^{3/2}V_D]
  -m\{\int_{(D\times \mathbb{R}^3)^N}\rho_N\sum_{j=1}^N \frac{\bm{\xi}_j^2}{4RT_w}\notag
\displaybreak[0]\\
&+\sum_{i=1}^N \frac{\bm{\xi}_i^2}{4RT_w}
  \frac{1}{(4\pi RT_w)^{3N/2}V_D^N}\exp(-\sum_{j=1}^N \frac{\bm{\xi}_j^2}{4RT_w}) \}d\bm{Z}_1\cdots d\bm{Z}_N \notag\displaybreak[0]\\
 =& -\{m+2mN\ln[(4\pi RT_w)^{3/2}V_D]+\int_{D\times \mathbb{R}^3}
  f(t,\bm{X},\bm{\xi}) \frac{\bm{\xi}^2}{4RT_w}d\bm{X}d\bm{\xi}\notag\\
 &+mN\int_{\mathbb{R}^3} \frac{\bm{\xi}^2}{4RT_w}
 \frac{1}{(4\pi RT_w)^{3/2}}\exp(-\frac{\bm{\xi}^2}{4RT_w})d\bm{\xi}
 \}\notag\displaybreak[0]\\
 \ge& -\{mN(\frac52+\ln[(4\pi RT_w)^3V_D^2])+\int_{D\times \mathbb{R}^3}
  f(t,\bm{X},\bm{\xi}) \frac{\bm{\xi}^2}{4RT_w}d\bm{X}d\bm{\xi}\}\notag\displaybreak[0]\\
=& -\frac{1}{2RT_w}\int_D \langle \frac12\bm{\xi}^2f\rangle d\bm{X}
  +\mathrm{const.}\label{eq:estimate}
\end{align}
Remind that $mN$ is the total mass in $D$ and thus is finite.
Hence \eqref{eq:estimate} means that $\mathcal{F}\ge\frac{1}{2}\int_D \langle \frac12\bm{\xi}^2f\rangle d\bm{X}+\mathrm{const.}$ by \eqref{eq:F-H}. Moreover, if $\mathcal{F}$ is initially finite, 
then $\mathcal{F}$, $\mathcal{H}$, and $\int_D \langle \frac12\bm{\xi}^2f\rangle d\bm{X}$ 
are bounded individually from both below and above for $t\ge0$.
}

\section{The case of modified Enskog--Vlasov equation\label{EV}}

In the case of Enskog--Vlasov equation, an external force term $F_{i}{\partial f}/{\partial\xi_{i}}$ is added on the left-hand side of \eqref{MEE}, where
\begin{equation}
F_{i}=-\int_{D}\frac{\partial}{\partial X_{i}}\Phi(|\bm{Y}-\bm{X}|)\rho(t,\bm{Y})d\bm{Y},\label{Vlasov}
\end{equation} 
and $\Phi$ is the attractive isotropic force potential between molecules. 

By taking the $(1+{\ln{f}})$-moment of the external force term:
\begin{equation}
\langle(1+{\ln f})F_{i}\frac{\partial f}{\partial\xi_{i}}\rangle=\langle F_{i}\frac{\partial}{\partial\xi_{i}}(f{\ln f})\rangle=0,
\end{equation}
and thus the external term is found to give no contribution to (\ref{eq:27}).
Hence, (\ref{eq:Htheorem}) remains unchanged.

Next consider the $(1+\ln({f}/f_{w}))$-moment:
\begin{equation}
\langle(1+\ln\frac{f}{f_{w}})F_{i}\frac{\partial f}{\partial\xi_{i}}\rangle=-\langle({\ln{f_{w}}})F_{i}\frac{\partial f}{\partial\xi_{i}}\rangle=F_{i}\langle\frac{\bm{\xi}^{2}}{2RT_{w}}\frac{\partial f}{\partial\xi_{i}}\rangle=-\frac{\rho v_{i}F_{i}}{RT_{w}}.
\end{equation}
Since $F_i$ is given by \eqref{Vlasov}, 
\begin{align}
-\int_{D}\frac{\rho v_{i}F_{i}}{RT_{w}}d\bm{X} 
=& \int_{D}\frac{\rho v_{i}}{RT_{w}}\frac{\partial}{\partial X_{i}}\int_{D}\Phi(|\bm{Y}-\bm{X}|)\rho(t,\bm{Y})d\bm{Y}d\bm{X}\displaybreak[0]\notag\\
=&-\int_{\partial D}\frac{\rho v_{i}}{RT_{w}}n_{i}\int_{D}\Phi(|\bm{Y}-\bm{X}|)\rho(t,\bm{Y})d\bm{Y}dS(\bm{X})\notag\\
&-\int_{D}\frac{1}{RT_{w}}\frac{\partial(\rho v_{i})}{\partial X_{i}}\int_{D}\Phi(|\bm{Y}-\bm{X}|)\rho(t,\bm{Y})d\bm{Y}d\bm{X}\displaybreak[0]\notag\\
=& \int_{D}\frac{1}{RT_{w}}\frac{\partial\rho(t,\bm{X})}{\partial t}\int_{D}\Phi(|\bm{Y}-\bm{X}|)\rho(t,\bm{Y})d\bm{Y}d\bm{X}\displaybreak[0]\notag\\
=& \frac{1}{2}\frac{d}{dt}\int_{D\times D}\frac{\Phi(|\bm{Y}-\bm{X}|)}{RT_{w}}\rho(t,\bm{X})\rho(t,\bm{Y})d\bm{X}d\bm{Y},
\end{align}
where $v_in_i=0$ on $\partial D$ and the continuity equation \eqref{eq:continuity} have been used.
Therefore, in the case of the modified Enskog--Vlasov equation,
\begin{align}
\mathcal{F}^{\prime}\equiv & RT_{w}(\int_{D}\langle f\ln(f/f_{w})\rangle d\bm{X}+\mathcal{H}^{(c)})\notag\\
&+\frac{1}{2}\int_{D\times D}\Phi(|\bm{Y}-\bm{X}|)\rho(t,\bm{X})\rho(t,\bm{Y})d\bm{X}d\bm{Y},
\end{align}
decreases monotonically in time:
\begin{equation}
\frac{d\mathcal{F}^{\prime}}{dt}\le0.
\end{equation}
This {corresponds to} the result 
in Appendix B of \cite{TMH21} for a simple kinetic model.
{If $\Phi\ge C$ holds for some constant $C$, $\mathcal{F}^\prime$ is bounded from below and approaches a stationary value as $t\to\infty$. }

\section*{Acknowledgements}

The present work has been supported in part by the JSPS KAKENHI Grant
(No.~22K03923) and the Kyoto University Foundation.
The author thanks Masanari Hattori for his helpful comments to the draft of this paper.

\medskip
Received xxxx 20xx; revised xxxx 20xx.
\medskip

\end{document}